\theoremstyle{plain}%
  \newtheorem{thm}{\sc Theorem}[section]%
  \newtheorem{prop}[thm]{\sc Proposition}%
  \newtheorem{cor}[thm]{\sc Corollary}%
\theoremstyle{remark}
  \newtheorem{rem}[thm]{\sc Remark}
\begin{document}

\title{Resonant-tunneling\\ in discrete-time quantum walk}%
\markright{Resonant-tunneling in DTQW}

\author{%
  {Kaname Matsue}\thanks{Institute of Mathematics for
    Industry/International Institute for Carbon-Neutral Energy
    Research (WPI-I$^2$CNER), Kyushu University, Fukuoka 819-0395,
    Japan,
    kmatsue@imi.kyushu-u.ac.jp
  }%
  \and %
  {Leo Matsuoka}\thanks{
    Graduate School of Engineering, Hiroshima University,
    Higashi-hiroshima 739-8527, Japan,
    leo-matsuoka@hiroshima-u.ac.jp
  }%
  \and %
  {Osamu Ogurisu}\thanks{Division of Mathematical and Physical
    Sciences, Kanazawa University, Kanazawa, 920-1192, Japan,
    ogurisu@staff.kanazawa-u.ac.jp
  }%
  \and %
  {Etsuo Segawa}\thanks{ Graduate School of Information Sciences,
    Tohoku University, Aoba, Sendai 980-8579, Japan,
    e-segawa@m.tohoku.ac.jp
  }%
}

\date{August 29, 2017}

\maketitle{}

\begin{abstract}
  We show that discrete-time quantum walks on the line, $\mathbb{Z}$,
  behave as ``the quantum tunneling.''  In particular, quantum walkers
  can tunnel through a double-well with the transmission probability
  $1$ under a mild condition.  This is a property of quantum walks
  which cannot be seen on classical random walks, and is different
  from both linear spreadings and localizations.
\end{abstract}

\noindent{\it Keywords\/}
quantum walk, quantum mechanics, resonant-tunneling, stationary
measures.


\newpage

\section{Introduction}
\label{sec:intro}

The quantum walk (QW) is a quantum version of the classical random
walk.  Their primitive forms of the discrete-time quantum walks on
$\mathbb{Z}$ can be seen in Feynman's checker board \cite{FH}.  It is
mathematically shown (e.g.~\cite{K1}) that this quantum walk has a
completely different limiting behavior from classical random walks,
which is a typical example showing a difficulty of intuitive
description of quantum walks' behavior. 

Relations between QW and its background quantum mechanics (QM) are
very interesting, too.  
QW has been considered as quantum dynamical simulations 
such as, discretizations of the Dirac equation (e.g.~\cite{St, GF,ANF})
and also spatially discretized Schr\"{o}dinger equation (e.g.~\cite{Sh}). 
To connect QW to theses quantum dynamical system, 
some spatial and temporal scaling were needed to obtain the continuum limit
from the discrete model of QW. 
However our model treated here reproduces naturally the following 
famous quantum dynamics model following the Schr\"{o}dinger equation 
without any scaling limit. 
Here,
we consider the quantum
tunneling, which is one of the most famous quantum effects and has
well-developed since the early period of QM; this effect shows that a
quantum particle can tunnel through a barrier that it classically
could not surmount (e.g.~\cite{M1}).  In particular, the
resonant-tunneling~\cite{TE1,CET1} is very impressive; consider the
Schr{\"o}dinger equation
\begin{displaymath}
  i\frac{\partial}{\partial t}\psi(x,t)=-\frac{\partial^2}{\partial^2 x}\psi(x,t)+V(x)\psi(x,t)
  ,\quad
  \psi(\cdot,t)\in C^1(\mathbb{R})
\end{displaymath}
with a double-barrier (double-well) potential,
\begin{displaymath}
  V(x) =
  \begin{cases}
    V_0, & \mbox{ if } -L-w<x<-L \mbox{ or } L<x<L+w \\
    0, & \mbox{ otherwise. } \\
  \end{cases}
\end{displaymath}
Here, $2L>0$ is the distance between the two barriers, $w$ is the width
of them, and $V_0>0$.  Then, let us inject the the plane wave with
positive energy $E$ from $x=-\infty$.  The wave function $\psi(x,t)$
must be
\begin{displaymath}
  \psi(x,t)
  =
  e^{-iEt}
  \times
  \begin{cases}
    e^{i\sqrt{E}x} + {\rho} e^{-i\sqrt{E}x}, & \mbox{ for } x < -(L+w),\\
    {\tau} e^{i\sqrt{E}x}                 , &  \mbox{ for } x > L+w,\\
  \end{cases}
\end{displaymath}
with constants $\tau$ and $\rho\in\mathbb{C}$.  In this case, we
define the transmission probability by $T=|{\tau}|^2$ and the
reflection probability by $R=|{\rho}|^2$.  It is well known that
$T+R=1$.  In particular, it holds that $T=1$ and $R=0$ for some
resonance level, $E=E_0$.  Therefore, the injected plane wave can
tunnel the double-barrier without any reflection.  Remark that $\psi$
is not a $L^2(\mathbb{R})$-function but a bounded function.

\medskip

In this article, we show that the quantum walker can behave like the
above resonant-tunneling.  We consider a two-state QW on $\mathbb{Z}$
with two defects (double-barrier) and will prove that the quantum
walker can tunnel through the double-barrier without any reflection,
nevertheless the double-barrier has non-zero reflection elements.  We
call it {\em Quantum resonant-tunneling walk} (QRTW).

We believe that QRTW is the {\em third\,} characteristic of QW
compared to classical random walk, because any random walk on
$\mathbb{Z}$ cannot archive $R=0$ except the trivial case and it is
well-known that QW has (at least) two famous characteristics, namely,
localization and linear spreading.
The third interesting behavior cannot be derived without moving our 
concentration from the square summable space to the boundary 
functional space.
From the view point of this $\ell^\infty$ space, 
our QW model describes the solution of the quantum graph~\cite{albe,ES,GS} following the Schr{\"o}dinger equation of the
the quantum tunneling with double-barrier. See Section~\ref{sec:vs} for more detailed discussion. 
\medskip

Our organization of this paper is the following: %
In Section~\ref{sec:maintheorem}, we define our QW model with
double-barrier and prove our main result, Theorem~\ref{thm:main}. %
In Section~\ref{sec:vs}, we compare QRTW with resonant-tunneling in QM
using quantum graph walk~\cite{ES,GS}. %
In Section~\ref{sec:discussion}, we discuss our choice of initial
states, stationary measures, and experimental realization.  In
Appendix, we give a short comment on the equality, $T+R=1$.

\section{Main Theorem}
\label{sec:maintheorem}

Let us recall the definition of two-state QW model on $\mathbb{Z}$
(e.g.~\cite{K2}).  Define
\begin{displaymath}
  |L\rangle=
  \begin{bmatrix}
    1 \\ 0
  \end{bmatrix}, \quad
  |R\rangle=
  \begin{bmatrix}
    0 \\ 1
  \end{bmatrix},
\end{displaymath}
where $L$ and $R$ refer to the left and right chirality state,
respectively.  The time evolution of the walk at $x\in\mathbb{Z}$ is
determined by 2-dimensional unitary matrix
\begin{equation}
  \label{eq:localcoin}
  U_x=
  \begin{bmatrix}
    a_x & b_x \\ c_x & d_x
  \end{bmatrix}
  \in U(2).
\end{equation}
To define the dynamics of our model, we divide $U_x$ into two
matrices:
\begin{displaymath}
  P_x=
  \begin{bmatrix}
    a_x & b_x \\ 0 & 0
  \end{bmatrix},
  \quad
  Q_x=
  \begin{bmatrix}
    0 & 0 \\ c_x & d_x
  \end{bmatrix},
\end{displaymath}
with $U_x=P_x+Q_x$.  These $P_x$ and $Q_x$ represent that the walker
moves to the left and the right at $x$ at each time step,
respectively.  Let $\Psi_n$ denote the amplitude at time $n$ of the QW
on $\mathbb{Z}$:
\begin{displaymath}
  \Psi_n
  = 
  \begin{bmatrix}
    \cdots,
    \begin{bmatrix}
      \Psi_n^L(-1)\\
      \Psi_n^R(-1)\\
    \end{bmatrix},
    \begin{bmatrix}
      \Psi_n^L(0)\\
      \Psi_n^R(0)\\
    \end{bmatrix},
    \begin{bmatrix}
      \Psi_n^L(1)\\
      \Psi_n^R(1)\\
    \end{bmatrix},
    \cdots
  \end{bmatrix}',
\end{displaymath}
where $[\dots]'$ denotes the transposed operation. Then the time
evolution of the quantum walk is defined by
\begin{equation}
  \label{eq:timeevolution}
  \Psi_{n+1}(x) = P_{x+1}\Psi_n(x+1)+Q_{x-1}\Psi_n(x-1),
\end{equation}
where $\Psi_n(x)$ denotes the amplitude at time $n$ and position $x$.
Equivalently, 
\begin{displaymath}
  \begin{bmatrix}
    \Psi_{n+1}^L(x) \\
    \Psi_{n+1}^R(x)
  \end{bmatrix}
  =
  \begin{bmatrix}
    a_{x+1}\Psi_{n}^L(x+1) + b_{x+1}\Psi_{n}^R(x+1) \\
    c_{x-1}\Psi_{n}^L(x-1) + d_{x-1}\Psi_{n}^R(x-1)
  \end{bmatrix}.
\end{displaymath}
Let 
\begin{displaymath}
  \mathcal{H}
  :=\left\{
    \Psi=\{\Psi(x)\}_{x\in\mathbb{Z}}; \|\Psi\|^2:=\sum_{x\in\mathbb{Z}}(
    |\Psi^L(x)|^2
    +
    |\Psi^R(x)|^2)<\infty
    \right\}
\end{displaymath}
be the total Hilbert space of our QW.  It is well-known that
(\ref{eq:timeevolution}) defines a unitary operator $U^{(s)}$
acting on $\mathcal{H}$ satisfying that
\begin{displaymath}
  \Psi_n = (U^{(s)})^n\Psi_0
\end{displaymath}
for any $n\ge0$.

First, we consider the free case; namely,
\begin{displaymath}
  U_x =
  \begin{bmatrix}
    e^{ip} & 0 \\
    0 & e^{iq}
  \end{bmatrix}
\end{displaymath}
with constants $p$ and $q\in\mathbb{R}$ for all $x\in\mathbb{Z}$.  Let
\begin{displaymath}
  \Psi_0(0)=
  \begin{bmatrix}
    0 \\ 1
  \end{bmatrix},
  \quad
  \Psi_0(x) =
  \begin{bmatrix}
    0 \\ 0
  \end{bmatrix}
  \quad (x\ne{0})
\end{displaymath}
be an initial state $\Psi_0$.  Then a quantum walker stays at $x=0$ at
the initial time, and she, the quantum walker, moves to the position
$x=n$ at the time $n$ with
\begin{math}
  \Psi_{n}(n)=
  \begin{bmatrix}
    0 \\ e^{iqn}
  \end{bmatrix}.  
\end{math}
In contrast, if
\begin{displaymath}
  \Psi_0(0)=
  \begin{bmatrix}
    1 \\ 0
  \end{bmatrix},
  \quad
  \Psi_0(x) =
  \begin{bmatrix}
    0 \\ 0
  \end{bmatrix}
  \quad (x\ne{0}),
\end{displaymath}
she moves to the position $x=-n$ at the time $n$ with
\begin{math}
  \Psi_n(-n)=
  \begin{bmatrix}
    e^{ipn} \\ 0
  \end{bmatrix}
\end{math}.  These shows that the quantum walker freely runs over
$\mathbb{Z}$.

In this article, we mainly consider the following QW with two defects
at $x=0$ and $x=m(>0)$: let
\begin{displaymath}
  U_f=
  \begin{bmatrix}
    e^{ip} & 0 \\
    0 & e^{iq}
  \end{bmatrix},
  \quad
  U_b=
  \begin{bmatrix}
    a & b \\
    c & d 
  \end{bmatrix}
  \in U(2)
\end{displaymath}
with constants $p$, $q\in\mathbb{R}$ and $a$, $b$, $c$,
$d\in\mathbb{C}$, and
\begin{displaymath}
  U_x
  =
  \begin{cases}
    U_b, & \mbox{ if } x=0 \mbox{ or } x=m, \\
    U_f, & \mbox{ otherwise. }
  \end{cases}
\end{displaymath}
See Figure~\ref{fig:doublebarrier}.
\begin{figure}[htbp]
  \begin{center}
    \includegraphics[width=12.0cm]{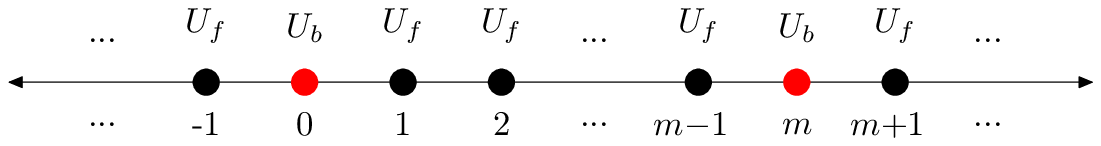}
    \caption{Two barriers $U_b$ are configured at $0$ and $m$.}
    \label{fig:doublebarrier}
  \end{center}
\end{figure}

Let
\begin{equation}
  \label{eq:initstate}
  \Psi_0(x)  =
  \begin{bmatrix}
    0 \\ e^{iqx}
  \end{bmatrix}
  \mbox{ if } x < 0,
  \quad
  \Psi_0(x) =
  \begin{bmatrix}
    0 \\ 0
  \end{bmatrix}
  \mbox{ if } x \ge 0
\end{equation}
be an initial state.  Note that $\Psi_0\not\in\mathcal{H}$ but a
bounded state and that $\Psi_n^R(x)$ is independent of $n$ for any
$x<0$.  In particular, $|\Psi_n^R(-1)|=1$, which means that one
quantum walker moves into the left barrier at $x=0$ from $x=-1$ at
each time step.  This setting corresponds to inject a plane wave into
the double-barrier from $x=-\infty$ in the resonant-tunneling
situation of QM.

Consider the infinite time limit, that is, $n\to\infty$.  Then we can
expect that $(U^{(s)})^n\Psi_0$ converges to an
$\ell^\infty$-stationary state $\Psi_\infty$ with
\begin{equation}
  \label{eq:inftimelimit}
  \Psi_\infty(x)  =
  \begin{bmatrix}
    r e^{ipx} \\ e^{iqx} 
  \end{bmatrix}
  \mbox{ if } x < 0,
  \quad
  \Psi_\infty(x) =
  \begin{bmatrix}
    0 \\ t e^{iqx}
  \end{bmatrix}
  \mbox{ if } x > m
\end{equation}
with constants $r$ and $t\in\mathbb{C}$.  We define the reflection
probability $R=|r|^2$ and the transmission probability $T=|t|^2$.
Note that $R+T=1$ (see Appendix).  Here, we define that $\phi$ is an
$\ell^\infty$-stationary state if and only if
$\phi\in\ell^\infty(\mathbb{Z};\mathbb{C}^2)$ and
$\|((U^{(s)})^n\phi)(x)\|_{\mathbb{C}^2}^2=\|\phi(x)\|_{\mathbb{C}^2}^2$
for all $x\in\mathbb{Z}$ and $n\in\mathbb{N}$.

Our main interest is the following:
\begin{quote}
  {\em Do there exist $U_f$ and $U_b$ admitting that $T=1$ ?}
\end{quote}

If there are such operators $U_f$ and $U_b$, then
(\ref{eq:inftimelimit}) means that all quantum walkers tunnel
through the double-barrier with probability $1$.  This is a
resonant-tunneling phenomenon of QW.

\begin{rem}
  The initial state $\Psi_0$ is {\em not\,} in $\mathcal{H}$, but we
  apply $U^{(s)}$ to $\Psi_0$.  Since we are interested in
  resonant-tunneling in QW, we must treat our problem in quantum
  scattering theory. (cf.\ The scattering state $\psi$ in QM in
  Section~\ref{sec:intro} is not a $L^2(\mathbb{R})$-function but a
  bounded function.) Therefore we extend the domain of $U^{(s)}$ to
  $\ell^\infty(\mathbb{Z})$ in a trivial way.  See
  Section~\ref{sec:discussion}, too.
\end{rem}

The following theorem is our main result in this article.
\begin{thm}
  \label{thm:main}
  Assume that $\Phi(0)^R=1$ and $\Phi(x)^L=0$ for $x>m$.  Let
  \begin{equation}
    \label{eq:rt}
    r=\frac{be^{ip}(1+|U_b||U_f|^{m-1})}{1-bc|U_f|^{m-1}}
    ,\quad
    t=\frac{d^2e^{-2iq}}{1-bc|U_f|^{m-1}}
  \end{equation}
  and
  \begin{equation}
    \label{eq:rtandtildes}
    \tilde{r}=\frac{re^{-ip}-b}{a},
    \quad
    \tilde{t}=\dfrac{t e^{iq(m+1)}}{d}.
  \end{equation}
  Here, $|U_f|$ and $|U_b|$ denote the determinants of $U_f$ and
  $U_b$, respectively.  Then we have that
  \begin{equation}
    \label{eq:ss}
    U^{(s)}\Phi=\Phi
    \quad\iff\quad
    \Phi(x)=
    \begin{cases}
      \begin{bmatrix}
        re^{ipx} \\ e^{iqx}
      \end{bmatrix},
      & x\le -1, \\
      \\
      \begin{bmatrix}
        \tilde{r} \\ 1
      \end{bmatrix},
      & x=0, \\
      \\
      \begin{bmatrix}
        \tilde{r}e^{ipx} \\ \tilde{t}e^{iq(x-m)}
      \end{bmatrix},
      & 0<x<m, \\
      \\
      \begin{bmatrix}
        0 \\ \tilde{t}
      \end{bmatrix},
      & x=m, \\
      \\
      \begin{bmatrix}
        0 \\ te^{iqx}
      \end{bmatrix},
      & x\ge m+1. \\
    \end{cases}
  \end{equation}
\end{thm}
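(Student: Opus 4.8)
The plan is to read the fixed-point equation $U^{(s)}\Phi=\Phi$ off the componentwise form displayed just after (\ref{eq:timeevolution}), namely
\begin{align*}
  \Phi^L(x) &= a_{x+1}\Phi^L(x+1)+b_{x+1}\Phi^R(x+1),\\
  \Phi^R(x) &= c_{x-1}\Phi^L(x-1)+d_{x-1}\Phi^R(x-1),
\end{align*}
and to exploit that the $L$-equation at $x$ only sees the coin at $x+1$ while the $R$-equation at $x$ only sees the coin at $x-1$. Hence the defect coin $U_b$, placed at $0$ and $m$, enters these scalar relations only through the four ``neighbour'' sites $x\in\{-1,1,m-1,m+1\}$; at every other $x$ both relations use $U_f$ and reduce to the decoupled first-order recurrences $\Phi^L(x)=e^{ip}\Phi^L(x+1)$ and $\Phi^R(x)=e^{iq}\Phi^R(x-1)$. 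Being first order with unit-modulus ratio, these have exactly the one-parameter geometric solutions exhibited region by region in (\ref{eq:ss}); boundedness holds automatically, the normalisation $\Phi(0)^R=1$ pins the $R$-profile on $x\le0$, and the hypothesis $\Phi(x)^L=0$ for $x>m$ propagates leftward through the free $L$-recurrence at $x=m$ to force $\Phi^L\equiv0$ on all of $x\ge m$. After this reduction only the four scalar unknowns $r,\tilde r,\tilde t,t$ of the statement survive.

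Next I would impose the four genuinely nontrivial relations, i.e.\ the two recurrences evaluated at the barrier neighbours. The $L$-equation at $x=-1$ and the $R$-equation at $x=m+1$ read $\Phi^L(-1)=a\,\Phi^L(0)+b$ and $\Phi^R(m+1)=d\,\Phi^R(m)$ (using $\Phi^R(0)=1$ and $\Phi^L(m)=0$), and solve immediately for the inner amplitudes, reproducing (\ref{eq:rtandtildes}); here one needs $a\neq0$ and $d\neq0$, a condition implicit in those very formulas. Before using the remaining two I would verify that the relations which are \emph{not} matching conditions — the two at $x=0$, the two at $x=m$, and all bulk relations — are satisfied identically by the geometric profiles, since their coins are free; this check is exactly what yields the converse ($\Leftarrow$) implication essentially for free. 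The two relations that genuinely couple the ends, the $L$-equation at $x=m-1$ and the $R$-equation at $x=1$, carry the phase accumulated across the interval $0<x<m$.

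Finally I would eliminate $\tilde r$ and $\tilde t$ between these last two equations. Substituting the inner profiles, their phase factors combine as $e^{i(p+q)(m-1)}=|U_f|^{m-1}$, and collecting terms produces the resonance denominator $1-bc\,|U_f|^{m-1}$; writing $ad-bc=|U_b|$ then converts the numerator into $1+|U_b|\,|U_f|^{m-1}$, which is (\ref{eq:rt}) for $r$, while cancelling the accumulated phases collapses the transmission amplitude to $t=d^2e^{-2iq}/(1-bc\,|U_f|^{m-1})$. I expect this elimination — correctly tracking the phase bookkeeping across the inter-barrier segment so that the back-and-forth reflections close into the single denominator $1-bc\,|U_f|^{m-1}$ — to be the main obstacle; everything else is linear algebra in four unknowns. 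Since each step in the chain is an equivalence (the non-matching equations hold identically, and the matching system has the unique solution (\ref{eq:rt})--(\ref{eq:rtandtildes})), the argument assembles into the stated ``$\iff$'', settling both directions simultaneously.
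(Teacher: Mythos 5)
Your proposal is correct and follows essentially the same route as the paper's own proof: free propagation away from the barriers reduces everything to the four amplitudes $r,\tilde r,\tilde t,t$, the equations at $x=-1$ and $x=m+1$ produce (\ref{eq:rtandtildes}), and the two end-coupling equations at $x=1$ and $x=m-1$ (with the accumulated phase $e^{i(p+q)(m-1)}=|U_f|^{m-1}$) give the $2\times 2$ linear system whose solution is (\ref{eq:rt}), exactly as in the paper. The only differences are presentational: you make explicit the implicit requirement $a,d\neq 0$ and the bookkeeping of which fixed-point equations are identities versus genuine constraints, which the paper compresses into ``checking by direct computation'' for the sufficiency direction.
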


\begin{proof}
  Let us prove the necessary part.  Since the quantum walker freely
  runs over $\mathbb{Z}\setminus\{0,m\}$ by $U_f$, we can write
  $\Phi(x)$ as in (\ref{eq:ss}) at
  $x\in\mathbb{Z}\setminus\{0,m\}$.  In addition, we have
  \begin{displaymath}
    \Phi(0)^{L}
    =\frac{\Phi(-1)^{L}-b\Phi(0)^{R}}{a}
    =\frac{re^{-ip}-b}{a}
    =\tilde{r}
  \end{displaymath}
  and
  \begin{displaymath}
    \Phi(m)^{R}
    =\frac{\Phi(m+1)^{R}-c\Phi(m)^{L}}{d}
    =\frac{te^{iq(m+1)}}{d}
    =\tilde{t}.
  \end{displaymath}
  Therefore, we have (\ref{eq:rtandtildes}).  Since
  $\Phi(1)^{R}=c\Phi(0)^{L}+d\Phi(0)^{R}=c\tilde{r}+d$, we have
  \begin{displaymath}
    \tilde{t}=\Phi(m)^{R}=e^{iq(m-1)}\Phi(1)^{R}=e^{iq(m-1)}[c\tilde{r}+d]. 
  \end{displaymath}
  Similarly, since
  $\Phi(m-1)^{L}=a\Phi(m)^{L}+b\Phi(m)^{R}=b\tilde{t}$, we have
  \begin{displaymath}
    \tilde{r}=\Phi(0)^{L}=e^{ip(m-1)}\Phi(m-1)^{L}=be^{ip(m-1)}\tilde{t}.
  \end{displaymath}
  Solving these simultaneous linear equations for $\tilde{t}$ and
  $\tilde{r}$, we obtain that
  \begin{displaymath}
    \begin{bmatrix}
      \tilde{t}\\
      \tilde{r}
    \end{bmatrix}
    =
    \frac{de^{iq(m-1)}}{1-bc|U_f|^{m-1}}
    \begin{bmatrix}
      1 \\ be^{ip(m-1)}
    \end{bmatrix}.
  \end{displaymath}
  This and (\ref{eq:rtandtildes}) imply (\ref{eq:rt}).
  Checking $U^{(s)}\Phi=\Phi$ by direct computations, we can easily
  prove the sufficient part.
\end{proof}

\begin{rem}
  In this theorem, since we consider the situation where quantum
  walkers are constantly injected into the double-barrier from the
  \emph{left} side, we assume that $\Phi(0)^{R}=1$ and $\Phi(x)^L=0$
  for $x>m$.  Let us consider the solution $\tilde{\Phi}$ of
  $U^{(s)}\tilde{\Phi}=\tilde{\Phi}$ with $\tilde{\Phi}(m)^L=1$ and
  $\tilde{\Phi}(x)^R=0$ for $x<0$.  This solution $\tilde{\Phi}$
  corresponds to the situation where quantum walkers are constantly
  injected into the double-barrier from the \emph{right\,} side.  Then
  we can obtain all $\ell^\infty(\mathbb{Z})$-solutions of
  $U^{(s)}\Psi=\Psi$ by linear combinations of $\Phi$ and
  $\tilde{\Phi}$.  N.~Konno, et al, have studied such
  $\ell^\infty(\mathbb{Z})$-solutions in other contexts
  in~\cite{K2,K3,K4,K5}.  See Section~\ref{sec:discussion}, too.
\end{rem}

This theorem gives us a mild condition for $T=1$, that is, QRTW.  Note
that the case where $b=c=0$ is trivial, because it is a
reflection-less case.  In the rest of this section, we omit this case.
\begin{cor}
  \label{cor:condition}
  Assume $bc\ne0$. Then, we have
  $T=1 \iff R=0 \iff |U_f|^{m-1}|U_b|=-1$.
\end{cor}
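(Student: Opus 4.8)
The first equivalence $T=1\iff R=0$ costs nothing: by the relation $R+T=1$ recorded in the Appendix, the two statements coincide. So the entire content of the corollary is the equivalence of $R=0$ (equivalently $T=1$) with the resonance condition $|U_f|^{m-1}|U_b|=-1$, and my plan is to read this off directly from the explicit formulas \eqref{eq:rt} for $r$ and $t$. The only ingredient beyond arithmetic will be the unitarity of $U_b$, so I would extract the needed facts first.

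The two consequences of $U_b\in U(2)$ that I expect to carry the proof are, first, that the determinants have unit modulus (clearly $|U_f|=e^{i(p+q)}$ does, and $|U_b|$ does since $U_b$ is unitary), so that $\Lambda:=|U_f|^{m-1}|U_b|$ lies on the unit circle; and second, that the orthonormality of the rows/columns of $U_b$ yields the identities $bc=-|U_b|\,|b|^2$ together with $|d|^2=|a|^2=1-|b|^2$. The identity $bc=-|U_b|\,|b|^2$ follows from $a\bar c+b\bar d=0$ and $|a|^2+|b|^2=1$, which give $d=\bar a\,|U_b|$ and hence $bc=-|b|^2|U_b|$; it is this identity that makes the denominator of $r$ and $t$ tractable.

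With these in hand I would argue through $t$, which is cleanest. Writing the common denominator as $D:=1-bc|U_f|^{m-1}$ and substituting $bc=-|U_b|\,|b|^2$ turns it into $D=1+\Lambda|b|^2$, so that $T=|t|^2=|d|^4/|D|^2$ with $|d|^4=(1-|b|^2)^2$ and $|D|^2=1+2\,\mathrm{Re}(\Lambda)\,|b|^2+|b|^4$ (using $|\Lambda|=1$). The equation $T=1$ then reduces, after cancelling the $1$ and $|b|^4$ terms, to $-|b|^2=\mathrm{Re}(\Lambda)\,|b|^2$; since $bc\ne0$ forces $|b|^2>0$, this says $\mathrm{Re}(\Lambda)=-1$, which on the unit circle is equivalent to $\Lambda=-1$, i.e. $|U_f|^{m-1}|U_b|=-1$. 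Combined with $T=1\iff R=0$ this closes the full chain.

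The one place I expect friction is the alternative (and more intuitive) route through $r$: there one wants to say $R=|r|^2=0\iff r=0\iff 1+\Lambda=0$, using $b\ne0$ and $e^{ip}\ne0$ to kill the numerator prefactor, but this step is only valid once $D\ne0$ is confirmed. Since $D=1+\Lambda|b|^2$ has modulus at least $1-|b|^2\ge0$, it can vanish only in the degenerate antidiagonal case $|b|=1$ (forcing $a=d=0$), precisely where $r$ and $t$ become $0/0$. The main obstacle is therefore exactly this degenerate case; working through $t$ by the direct modulus comparison $|d|^4=|D|^2$ above sidesteps it, since there both sides vanish consistently, and the bookkeeping of the $U(2)$ identities is then all that remains.
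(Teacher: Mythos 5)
Your core computation is correct, but it is not the route the paper's own proof of this corollary takes; it is, almost step for step, the paper's \emph{alternative} proof given in Remark~\ref{rem:anotherProof}. The paper's official proof works with $r$ rather than $t$: by Theorem~\ref{thm:main},
\[
  R=|r|^2=\left|\frac{be^{ip}\left(1+|U_b||U_f|^{m-1}\right)}{1-bc|U_f|^{m-1}}\right|^2,
\]
and since $bc\neq0$ forces $b\neq0$, the numerator vanishes exactly when $1+|U_b||U_f|^{m-1}=0$; combined with $R+T=1$, that is the entire proof. Your route through $t$ --- the $U(2)$ identities $bc=-|U_b|\,|b|^2$ and $|d|^2=1-|b|^2$, the rewriting of the denominator $D:=1-bc|U_f|^{m-1}$ as $1+\Lambda|b|^2$ with $\Lambda:=|U_f|^{m-1}|U_b|$ on the unit circle, and the reduction of $|d|^4=|D|^2$ to $\operatorname{Re}\Lambda=-1$ --- is exactly what Remark~\ref{rem:anotherProof} does via the parametrization $b=u\bar\beta$, $c=v\beta$, $e^{i\theta}=-\Lambda$, where the same reduction reads $\cos\theta=1$. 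Your version isolates the one unitarity identity actually needed instead of parametrizing $U_b$, which is arguably cleaner; the paper's one-line proof is shorter because killing the numerator of $r$ requires no modulus expansion at all.

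One claim in your final paragraph is wrong, though it does not damage the proof within its proper scope. The modulus comparison $|d|^4=|D|^2$ does \emph{not} ``sidestep'' the degenerate case $|b|=1$: if $a=d=0$ and $\Lambda=-1$, both sides are $0$, but this vacuous identity certifies nothing about $T$, since $t$ is then a $0/0$ expression and $T=|d|^4/|D|^2$ is no longer a valid representation of $T$. In fact, solving $U^{(s)}\Phi=\Phi$ directly in that case gives $\Phi(m)^L=e^{ip}\Phi(m+1)^L=0$ and hence $\Phi(m+1)^R=c\,\Phi(m)^L+d\,\Phi(m)^R=0$, i.e.\ $t=0$ and $T=0$ (complete reflection, exactly as the paper observes for $|bc|=1$ at the start of Remark~\ref{rem:anotherProof}). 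So in that corner the equivalence read off from the formulas genuinely fails, and the corollary must be understood on the set where the formulas of Theorem~\ref{thm:main} are defined, i.e.\ $1-bc|U_f|^{m-1}\neq0$. That implicit restriction is common to your argument, to the paper's proof, and to the statement itself; your $t$-route does not remove it.
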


\begin{proof}
  QRTW is defined by $T=1$, equivalently, $R=0$.  Since
  \begin{displaymath}
    R=|r|^2=
    \left|\frac{be^{ip}(1+|U_b||U_f|^{m-1})}{1-bc|U_f|^{m-1}}\right|^2
  \end{displaymath}
  by Theorem~\ref{thm:main}, we obtain the desired result.
\end{proof}

\begin{rem}
  \label{rem:anotherProof}
  We can also prove this corollary using geometric series as follows.
  
  Assume $|bc|=1$.  Note that $a=d=0$ because $U_b$ is a unitary
  matrix.  Therefore, any quantum walkers is completely reflected by
  the barriers.  Thus, $T=0$.

  Assume $|bc|<1$.  Then $\Psi_\infty^R(m+1)={t}{}e^{iq(m+1)}$ is the
  summation of all the amplitudes of quantum walkers with $k$-times
  round trips between the two barriers.  Since $\Psi_n^R(-1)=e^{-iq}$
  for all $n$, we have
  \begin{displaymath}
    \Psi_\infty^R(m+1) 
    =
    \sum_{k=0}^{\infty} d{e^{iq(m-1)}} \left[ b{e^{ip(m-1)}} c{e^{iq(m-1)}} \right]^k d
    = \frac{e^{iq(m-1)}d^2}{1-bc|U_f|^{m-1}}.
  \end{displaymath}
  Write the unitary matrix $U_b$ as
  \begin{displaymath}
    U_b=
    \begin{bmatrix}
      u\overline\alpha & u\overline\beta \\
      v\beta & -v\alpha
    \end{bmatrix}
    \mbox{ with }
    \alpha, \beta, u, v\in\mathbb{C}, |\alpha|^2+|\beta|^2=|u|=|v|=1
  \end{displaymath}
  and put $e^{i\theta}=-|U_f|^{m-1}|U_b|$.  Then, we have
  \begin{equation}
    \label{eq:tau}
    \begin{split}
      |{t}|
      = \frac{1-|\beta|^2}{|1-e^{i\theta}|\beta|^2|}.
    \end{split}
  \end{equation}
  Consequently, $T=|{t}|^2=1$ if and only if
  \begin{displaymath}
    1-2|\beta|^2+|\beta|^4
    = |1-e^{i\theta}|\beta|^2|^2
    = 1-2|\beta|^2\cos\theta+|\beta|^4.
  \end{displaymath}
  Thus we have $\beta=0$ or $\cos\theta=1$.  Since the former is
  equivalent to $bc=0$, we can neglect this case by assumption.  Since
  the latter is equivalent to $-|U_f|^{m-1}|U_b|=e^{i\theta}=1$, we
  obtain the desired result.
\end{rem}

\begin{cor}
  Let $I$ be the 2-dimensional identity matrix and
  \begin{math}
    T_\theta=
    \begin{bmatrix}
      \cos2\theta & \sin2\theta \\
      \sin2\theta & -\cos2\theta
    \end{bmatrix}
  \end{math} with $\theta\in\mathbb{R}$.  Take $U_f=I$ and
  $U_b=T_\theta$. Then, $T=1$.
\end{cor}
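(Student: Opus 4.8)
The plan is to reduce the statement to the determinant criterion of Corollary~\ref{cor:condition}. That corollary tells us that, under the assumption $bc\ne0$, the identity $T=1$ is equivalent to $|U_f|^{m-1}|U_b|=-1$. So it suffices to compute the two determinants for the prescribed coins and then to check the side condition $bc\ne0$.

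First I would compute the determinants. Since $U_f=I$ we have $|U_f|=1$, hence $|U_f|^{m-1}=1$ for every $m$. For the barrier coin, a direct expansion gives
\begin{displaymath}
  |U_b|=\det T_\theta=\cos2\theta\cdot(-\cos2\theta)-\sin2\theta\cdot\sin2\theta=-(\cos^2 2\theta+\sin^2 2\theta)=-1.
\end{displaymath}
Therefore $|U_f|^{m-1}|U_b|=-1$, which is exactly the resonance condition appearing in Corollary~\ref{cor:condition}.

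Next I would verify the side condition. Reading off the entries of $U_b=T_\theta$, we have $b=c=\sin2\theta$, so $bc=\sin^2 2\theta$. As long as $\sin2\theta\ne0$, the hypothesis $bc\ne0$ of Corollary~\ref{cor:condition} holds, and combining it with the determinant computation above yields $T=1$ at once.

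The only point requiring care---what I expect to be the sole obstacle---is the degenerate locus $\sin2\theta=0$, where $bc=0$ and Corollary~\ref{cor:condition} does not apply. There $T_\theta$ is diagonal with $b=c=0$, i.e.\ the reflection-less case explicitly set aside before the corollary. For this remaining case I would argue directly from Theorem~\ref{thm:main}: substituting $b=0$ into the formula~(\ref{eq:rt}) for $r$ gives $r=0$, hence $R=|r|^2=0$ and $T=1-R=1$ by $R+T=1$. Consequently $T=1$ holds for every $\theta\in\mathbb{R}$, which completes the argument.
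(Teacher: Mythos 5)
Your proposal is correct and follows essentially the paper's own route: the corollary is an immediate consequence of Corollary~\ref{cor:condition}, since $|U_f|=1$ and $|U_b|=\det T_\theta=-1$ give the resonance condition $|U_f|^{m-1}|U_b|=-1$. Your separate treatment of the degenerate locus $\sin2\theta=0$ matches the paper's remark that the case $b=c=0$ is the trivial reflection-less one, so the argument is complete.
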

This corollary is very important because $T_\theta$ corresponds to the
Jones matrix of a half wave plate, which is used in the implementation
of the discrete-time quantum walk by linear optical
elements~\cite{Zetal}. The detail of the implementation of QRTW will
appear in our forthcoming article. Note that $T_\theta$ includes
Hadamard matrix,
\begin{math}
  H=
  \dfrac{1}{\sqrt{2}}
  \begin{bmatrix}
    1 & 1 \\
    1 & -1 
  \end{bmatrix}
\end{math}, with $\theta=\pi/8$.

\section{QRTW vs.\ Resonant-Tunneling in QM}
\label{sec:vs}

In this section, we explain that our QRTW naturally connects the
resonant-tunneling in QM in the limit of the double-barrier width is
$0$, using the notion of the quantum graph~\cite{ES,GS}.

Let us consider the virtual quantum mechanical situation that delta
potentials~\cite{albe} are assigned on the real line $\mathbb{R}$ with
the regular interval $s>0$ at $\{sj;\,j\in \mathbb{Z}\}$ and
investigate the stationary behavior of the plane wave.  We regard it
as a ``metric'' graph whose vertices are the assigned delta potential's
places and the Euclidean length of edges are $s$.  The height of the
delta potential on $sj$ is described by $\alpha_j$ ($\geq 0$).  Let
$A$ be the set of symmetric directed edges of the one-dimensional lattice
$A=\{(j,j+1),(j+1,j);\, j\in \mathbb{Z}\}$; in $A$, we distinguish the directed edge from $j$ to $j+1$ and that from $j+1$ to $j$, 
and each directed edge has the Euclidean length $s$. 
If $a=(i,j)\in A$, then the inverse directed edge is denoted by $\bar{a}$ and the origin and terminal
vertices of $a$ are denoted by $o(a):=i$, $t(a):=j$, respectively.  
The problem can be converted to the quantum graph on this metric graph
which describes the stationary state of the plane wave on all metric
directed edges with the boundary conditions at each vertex: firstly, the domain
of the wave function is the pair of directed edge $a\in A$ and the distance
$x\in[0,s]$ from the origin vertex $o(a)$ satisfying
$\varphi(a;x)=\varphi(\bar{a};s-x)$, that is,
\begin{equation}
  \label{eq:arcs} 
  \varphi\in \{ \psi:A\times [0,s]\to\mathbb{C} \;|\; \psi(a;x)=\psi(\bar{a};s-x)\}; 
\end{equation}
secondly, the stationary Schr{\"o}dinger equation on each directed edge is 
\begin{equation}
  \label{eq:planewave} 
  -\frac{d^2}{dx^2} \varphi(a;x)=k^2\varphi(a;x); 
\end{equation}
thirdly, the boundary conditions at each vertex $u$ are given by
\begin{equation}
  \label{eq:bdc} 
  \begin{split}
    & \varphi(a;0) = \phi_{u} \mathrm{\;for \;any\;} a\in A \mathrm{\;with\;} o(a)=u; \\
    & \sum_{o(a)=u}\varphi'(a;x)|_{x=0} = \alpha_j\phi_u, 
  \end{split}
\end{equation}
where $\phi_u\in \mathbb{C}$ is an independent value of the connected
directed edge, and $\varphi'$ is the derivative of $\varphi$ with respect to
$x\in[0,s]$.  From (\ref{eq:arcs}) and (\ref{eq:planewave}),
$\varphi(a;x)$ is described by using some complex values
$\{\gamma_a\}_{a\in A}$ as follows:
\begin{equation}
  \label{eq:gamma} 
  \varphi(a;x)=\gamma_a e^{-ikx}+\gamma_{\bar{a}} e^{-ik(s-x)}. 
\end{equation}
Thus the problem is further reduced to find $\gamma_a$'s satisfying
the boundary conditions~(\ref{eq:bdc}).  The solution
$\{\gamma_a\}_{a\in A}$ satisfying all the boundary
conditions~(\ref{eq:bdc}) on all the vertices connects a quantum walk as
follows.

\begin{prop}[\cite{HKSS}] %
  \label{prop:segawa}
  Let $U_j$ be the $2$-dimensional unitary matrix of the quantum walk
  on $\mathbb{Z}$ $(j\in \mathbb{Z})$ in (\ref{eq:localcoin}) whose
  elements are given by
  \begin{displaymath}
    a_j = d_j = \frac{2e^{iks}} { 2+i\alpha_j /k },
    \quad
    b_j = c_j = e^{iks} \left(\frac{2} { 2+i\alpha_j /k } - 1\right),
  \end{displaymath}
  and $U^{(s)}$ be the total unitary operator of the quantum walk.
  Then $\{\gamma_a\}_{a\in A}$ in (\ref{eq:gamma}) is the solution
  satisfying the boundary conditions~(\ref{eq:bdc}) for all $u\in V$
  if and only if
  \[
    U^{(s)}\Psi=\Psi,
  \]
  where $\Psi(j)=[\gamma_a,\gamma_b]'$ with $t(a)=t(b)=j$ and
  $o(a)=j+1$, $o(b)=j-1$.
\end{prop}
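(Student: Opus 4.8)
The plan is to reduce both directions of the claimed equivalence to a single family of \emph{local} scattering relations, one at each vertex, and then to verify that both the quantum-walk condition and the quantum-graph boundary conditions produce identically the same relations. First I would unpack the quantum-walk side. Writing $U^{(s)}\Psi=\Psi$ componentwise via (\ref{eq:timeevolution}) (with $\Psi_{n+1}=\Psi_n=\Psi$) gives, for every $x$, the two scalar identities $\Psi^L(x)=a_{x+1}\Psi^L(x+1)+b_{x+1}\Psi^R(x+1)$ and $\Psi^R(x)=c_{x-1}\Psi^L(x-1)+d_{x-1}\Psi^R(x-1)$. Substituting the dictionary $\Psi^L(j)=\gamma_a$ with $a=(j+1,j)$ and $\Psi^R(j)=\gamma_b$ with $b=(j-1,j)$, and re-indexing the first identity by $u=x+1$ and the second by $u=x-1$, both collapse to the single per-vertex statement
$$\begin{bmatrix}\gamma_{(u,u-1)}\\ \gamma_{(u,u+1)}\end{bmatrix}=U_u\begin{bmatrix}\gamma_{(u+1,u)}\\ \gamma_{(u-1,u)}\end{bmatrix}\qquad(u\in\mathbb{Z}),$$
i.e.\ the amplitudes on the two arcs \emph{leaving} $u$ are $U_u$ applied to the amplitudes on the two arcs \emph{entering} $u$. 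Since each directed edge is outgoing from exactly one vertex, this family of $2\times2$ relations is equivalent to $U^{(s)}\Psi=\Psi$ with nothing left over.

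Next I would perform the same reduction on the quantum-graph side. At a fixed vertex $u$ the two arcs with $o(a)=u$ are $(u,u+1)$ and $(u,u-1)$; evaluating the representation (\ref{eq:gamma}) and its $x$-derivative at $x=0$ gives $\varphi((u,u\pm1);0)=\gamma_{(u,u\pm1)}+e^{-iks}\gamma_{(u\pm1,u)}$ and $\varphi'((u,u\pm1);0)=ik\bigl(-\gamma_{(u,u\pm1)}+e^{-iks}\gamma_{(u\pm1,u)}\bigr)$. The continuity condition (first line of (\ref{eq:bdc})) forces both values to equal $\phi_u$, which I would use to eliminate the outgoing amplitudes via $\gamma_{(u,u\pm1)}=\phi_u-e^{-iks}\gamma_{(u\pm1,u)}$. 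Feeding this into the derivative/Robin condition (second line of (\ref{eq:bdc})) yields a single scalar equation that I solve for $\phi_u$ in terms of the incoming amplitudes $\gamma_{(u+1,u)},\gamma_{(u-1,u)}$; back-substitution then expresses the outgoing amplitudes as an explicit $2\times2$ matrix acting on the incoming ones. Reading off that matrix gives exactly the entries in the statement, and I would separately check $|a_u|^2+|b_u|^2=1$ together with orthogonality of its columns to confirm $U_u\in U(2)$. As both the boundary conditions at every vertex and $U^{(s)}\Psi=\Psi$ are now shown equivalent to the same per-vertex scattering relations with the same matrices $U_u$, the two are equivalent, proving the proposition.

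The step I expect to be the main obstacle is the bookkeeping in this second reduction: keeping straight which $\gamma$ is incoming versus outgoing at $u$, and tracking the propagation phase $e^{\pm iks}$ acquired along an edge of length $s$. The precise placement of the factor $e^{iks}$ and the sign of $i\alpha_u/k$ in the coin entries is governed by the sign/time convention implicit in the ansatz (\ref{eq:gamma}) (namely, whether $e^{-ikx}$ is read as the mode entering or leaving $o(a)$); one must fix this convention consistently from the outset, since the two natural choices differ by a complex conjugation of $U_u$ and only one reproduces the stated formulas. Once the orientation and phase conventions are pinned down, the remainder is the routine solution of a $2\times2$ linear system.
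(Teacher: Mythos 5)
The paper gives no proof of this proposition at all---it is imported verbatim from \cite{HKSS}---so there is no internal argument to compare yours against; your proposal must stand on its own. Its skeleton is sound and is the natural one: with the dictionary $\Psi^L(j)=\gamma_{(j+1,j)}$, $\Psi^R(j)=\gamma_{(j-1,j)}$, the two components of $U^{(s)}\Psi=\Psi$ re-index (via $u=x+1$ and $u=x-1$ respectively) to the single per-vertex relation $[\gamma_{(u,u-1)},\gamma_{(u,u+1)}]'=U_u[\gamma_{(u+1,u)},\gamma_{(u-1,u)}]'$, with an exact bijection between equations, as you say; and eliminating $\phi_u$ from (\ref{eq:bdc}) is the right mechanism on the graph side.

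However, one point in your write-up is not a side worry to be hedged but the crux, and as written your proof does not close it. Carrying out your elimination with the formulas you display---which are faithful to (\ref{eq:gamma}) as printed, i.e.\ the $e^{-ikx}$ ansatz---gives $\phi_u=\frac{2e^{-iks}}{2-i\alpha_u/k}\bigl(\gamma_{(u+1,u)}+\gamma_{(u-1,u)}\bigr)$ and hence coin entries $a_u=d_u=\frac{2e^{-iks}}{2-i\alpha_u/k}$, $b_u=c_u=e^{-iks}\bigl(\frac{2}{2-i\alpha_u/k}-1\bigr)$. That is the entrywise complex conjugate of the statement, so your mid-proof claim that ``reading off that matrix gives exactly the entries in the statement'' is false for the convention your own formulas use. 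The stated entries emerge only from the ansatz with $e^{+ikx}$ in place of $e^{-ikx}$; in other words, (\ref{eq:gamma}) as printed and the proposition as stated are mutually inconsistent (an inconsistency of the paper, which you correctly sensed), and a finished proof must commit to the $e^{+ikx}$ reading---the one consistent with how the proposition is used later in (\ref{eq:transProb}), where $|U_b|=e^{2iks}\frac{2-i\alpha/k}{2+i\alpha/k}$ is needed. With that single commitment made explicit, your argument goes through; the unitarity check you defer also works, since $|a_u|^2+|b_u|^2=\frac{4}{4+(\alpha_u/k)^2}+\frac{(\alpha_u/k)^2}{4+(\alpha_u/k)^2}=1$ and $\overline{a_u}\,b_u$ is purely imaginary, so the columns are orthogonal.
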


Therefore the setting below of the following delta potential provides
the corresponding quantum tunneling walk: %
\[
  \alpha_j=
  \begin{cases}
    \alpha := wV_0 & \text{: $j=0,m$;} \\
    0 & \text{: otherwise.}
  \end{cases}
\]
Note the well known fact that the barrier potential with the width $w$
and the height $V_0$ (as in Section~\ref{sec:intro}) converges to the
delta potential $\alpha\delta(x)$ with $\alpha=wV_0$ as
$w\downarrow0$~\cite{albe}.  Thanks to Theorem~\ref{thm:main}, the
solution $\{\gamma_a\}_{a\in A}$ can be explicitly obtained.  Thus the
stationary state of our quantum tunneling walk in $\ell^\infty$ is not
only isomorphic to the stationary solution of the quantum graph
corresponding to the double-barrier delta potentials but also able to
provide the solution explicitly.  Using this, for example, we can
compute the transmission probability $T$ of the quantum graph by
\begin{equation}
  \label{eq:transProb} 
  T
  =
  |t|^2
  =
  \left(
    \frac
    {1-\frac{(\alpha/k)^2}{4+(\alpha/k)^2}}
    {\left|1+e^{2iksm}\frac{2-i\alpha/k}{2+i\alpha/k} \frac{(\alpha/k)^2}{4+(\alpha/k)^2}\right|}
  \right)^2
  =
  \left(
    \frac
    {1-|\beta|^2}
    {|1-e^{i\theta}|\beta|^2|}
  \right)^2
\end{equation}
with $|\beta|^2=\frac{4}{4+(\alpha/k)^2}$ and
$e^{i\theta}=-e^{2iksm}\frac{2-i\alpha/k}{2+i\alpha/k}$.  Therefore,
we can obtain that
\begin{equation}
  \label{eq:perfect}
  T=1 \iff e^{2iksm}\frac{2-i\alpha/k}{2+i\alpha/k}=-1
\end{equation}
in the way similar to Remark~\ref{rem:anotherProof}.  We can easily
check that (\ref{eq:perfect}) is consistent with
Corollary~\ref{cor:condition}.  Figure~\ref{fig:trans} shows the
dependence of the transmission probability $T$ on the wave number $k$
obtained by our quantum tunneling walk which is a famous figure known
as showing the quantum perfect transmission with the double-barrier,
e.g.,~\cite{albe}.
\begin{figure}[htbp]
  \begin{center}
    \includegraphics[width=7cm]{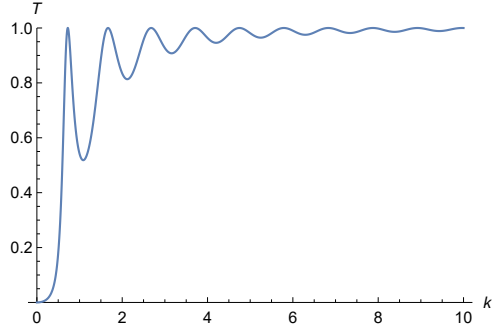}
    \caption{The wave number $k$ vs.\ the transmission probability $T$
      in (\ref{eq:transProb}) of the quantum graph for the
      $\alpha=1$ and $sm=3$ case: the wave numbers where the perfect
      transmission exhibits satisfy (\ref{eq:perfect}). }
    \label{fig:trans}
  \end{center}
\end{figure}

\section{Discussion}
\label{sec:discussion}

Our $\ell^\infty(\mathbb{Z})$-category investigations of QW in this
article have established the relation between QW and quantum scattering
theory and, in particular, revealed a resonant-tunneling phenomenon of
QW.

As already mentioned, the limit state $\Psi_\infty$ satisfies
$U^{(s)}\Psi_\infty=\Psi_\infty$ and
$\Psi_\infty\in\ell^{\infty}(\mathbb{Z})$.  Though we have considered
the initial state $\Psi_0$ defined by (\ref{eq:initstate}) in
Section~\ref{sec:maintheorem} for simplicity, it is natural to take
another initial state,
\begin{displaymath}
  \Phi_0(x)  =
  \begin{bmatrix}
    0 \\ e^{i(q+\delta)x}
  \end{bmatrix}
  \mbox{ if } x < 0,
  \quad
  \Phi_0(x) =
  \begin{bmatrix}
    0 \\ 0
  \end{bmatrix}
  \mbox{ if } x \ge 0,
\end{displaymath}
where $\delta\in\mathbb{R}$.  We can treat it in a same way as in
Section~\ref{sec:maintheorem}.  Each amplitude of $\Phi_0(x)$ gains a
phase shift $e^{i\delta}$ at each time step.  Therefore, the infinite
time limit $\Phi_\infty$ satisfies
$U^{(s)}\Phi_\infty=e^{i\delta}\Phi_\infty$.  In addition, since
\begin{displaymath}
  \Phi_\infty^R(m+1)=\frac{e^{iq(m-1)}d^2}{1-e^{i\delta(m-1)}|U_f|^{m-1}bc},
\end{displaymath}
we have that $T=1$ if and only if $(e^{i\delta}|U_f|)^{m-1}|U_b|=-1$
and $|bc|<1$.

In general, if there exists an eigenfunction $\Psi$ of $U^{(s)}$ in
$\ell^\infty(\mathbb{Z})$, then we can define a stationary measure of
the QW at position $x\in\mathbb{Z}$ by
\begin{displaymath}
  \mu(x) = |\Psi^L(x)|^2+|\Psi^R(x)|^2.
\end{displaymath}
N.~Konno, et al, have comprehensively studied such measures~\cite{K2,K3,K4,K5}.

The authors consider that such $\ell^\infty$-category studies will be
important in various areas of study of QW.

\medskip

Finally, we mention that this resonant-tunneling phenomenon of QW can
be realized in experiment. The operators will be implemented by half
wave plates and polarizing beam splitters, and the steady injection of
the quantum walker will be implemented by laser. The conceptual design
of ring-resonator named Quantum Walk Resonator will be discussed in
the forthcoming paper.

\section*{Acknowledgements}

The authors would like to thank N.~Konno for his kind discussion.
This work was supported by JSPS KAKENHI Grant Numbers %
JP17K14235, 
JP17H04978, 
JP24540208, JP16K05227, 
JP16K17637, and JP16K03939. 
KM was partially supported by Program for Promoting the reform of
national universities (Kyushu University), Ministry of Education, Culture,
Sports, Science and Technology (MEXT), Japan, World Premier
International Research Center Initiative (WPI), MEXT, Japan.

\section*{Appendix}

We use the fact that $R+T=1$ without any proof, because this is an
elementary fact derived from the two facts, the unitarity of $U^{(s)}$
and $\Psi_\infty$ being a stationary state; that is,
$U^{(s)}\Psi_\infty=\Psi_\infty$.  Consider the inflow and outflow of
$\Psi_\infty$ with respect to the interval $I: -1\le{x}\le{m+1}$.  The
quantity,
\begin{displaymath}
  P(\Psi_\infty;I) = \sum_{x=-1}^{m+1}(|\Psi_\infty^L(x)|^2+|\Psi_\infty^R(x)|^2),
\end{displaymath}
is the relative existence probability of quantum walkers in $I$.
Since $\Psi_\infty$ is stationary, $P(\Psi_\infty;I)$ is independent
of time.  On the other hand, since
$U_{-3}=U_{-1}=U_{m+1}=U_{m+3}=U_f$, we have that
$|\Psi_\infty^R(-2)|^2+|\Psi_\infty^L(m+2)|^2$ and
$|\Psi_\infty^L(-2)|^2+|\Psi_\infty^R(m+2)|^2$ are the inflow and
outflow of $I$, respectively.  Consequently, these two quantities must
be equal to each other.  The former is $1+0=1$ by the definition of
$\Psi_\infty$ and the latter is $|{r}|^2+|{t}|^2=R+T$.  Therefore,
$R+T=1$.  Note that this argument is valid even if there are more than
two barriers $U_b$ in $I$.


\begin{thebibliography}{10}
\bibitem{FH}
  {R.\ P.\ Feynman, A.\ R.\ Hibbs.}, %
  \textit{Quantum mechanics and path integrals}, %
  {Dover Publications, Inc., Mineola, NY, emended edition, 2010}.
  
\bibitem{K1}
  {N.\ Konno.}, %
  \textit{Quantum random walks in one dimension}, %
  {Quantum Information Processing 1(5) 345--354, 2002}.

\bibitem{St}
  {F. W. Strauch.}, 
  \textit{Discrete-time quantum walks: Continuous limit and symmetries}, 
  {J. Math. Phys. 48, 082102, 2007}.

\bibitem{GF}
  {G.\ D.\ Molfetta, F.\ Debbasch.}, 
  \textit{Discrete-time quantum walks: continuous limit and symmetries},
  Journal of Mathematical Physics 53, 123302, 2012  

\bibitem{ANF}
  {P. Arrighi, V. Nesme and M. Forets.}, 
  \textit{The Dirac equation as a quantum walk: higher dimensions, observational convergence}
  {J. Phys. A: Math. Theor. 47 465302, 2014}

\bibitem{Sh}
  Y. Shikano,
  \textit{From Discrete Time Quantum Walk to Continuous Time Quantum Walk in Limit Distribution},
  J. Comput. Theor. Nanosci. 10, 1558-1570, 2013.
  
\bibitem{M1}
  {A.\ Messiah.}, %
  \textit{Quantum Mechanics Volume 1}, %
  {North-Holland, Amsterdam, 1961}.

\bibitem{TE1}
  {R.\ Tsu, L.\ Esaki.}, %
  \textit{Tunneling in a finite superlattice}, %
  {Appl.\ Phys.\ Lett.\ 22, 562--564, 1973; doi:10.1063/1.1654509}.
  
\bibitem{CET1}
  {L.\ L.\ Chang, L.\ Esaki, R.\ Tsu.}, %
  \textit{Resonant tunneling in semiconductor double barriers}, %
  {Appl.\ Phys.\ Lett.\ 24, 593--595, 1974; doi:10.1063/1.1655067}.

\bibitem{albe}
  {S.\ Albeverio, F.\ Gesztesy, R.\ H{\o}egh-Krohn, H.\ Holden, P.\ Exner.}, %
  \textit{Solvable Model in Quantum Mechanics}, %
  {AMS Chelsea publishing, 2004}.

\bibitem{ES}
  {P.\ Exner, P.\ Seba.}, %
  \textit{Free quantum motion on a branching graph}, %
  {Rep.\ Math.\ Phys.\ 28, 7--26, 1989.}

\bibitem{GS}
  {S.\ Gnutzmann, U.\ Smilansky.}, %
  \textit{Quantum graphs: Applications to quantum chaos and universal spectral statistics}, %
  {Advances in Physics, 55, 527--625, 2006}.

\bibitem{K2}
  {N.\ Konno, M.\ Takei.}, %
  \textit{The non-uniform stationary measure for discrete-time quantum walks in one dimension}, %
  {Quantum Inf.\ Comput.\ 15, 1060--1075, 2015}.

\bibitem{K3}
  {T.\ Endo, N.\ Konno.}, %
  \textit{The stationary measure of a space-inhomogeneous quantum walk on the line}, %
  {Yokohama Math.\ J.\ 60, 33--47, 2014}.

\bibitem{K4}
  {T.\ Endo, H.\ Kawai, N.\ Konno.}, %
  \textit{Stationary measures for the three-state Grover walk with one defect in one dimension}, %
  {2016; arXiv:1608.07402}.

\bibitem{K5}
  {H.\ Kawai, T.\ Komatsu, N.\ Konno.}, %
  \textit{Stationary measure for two-state space-inhomogeneous quantum walk in one dimension}, %
  {2017; arXiv:1707.04040}.

\bibitem{Zetal}
  {Z.\ Zhao, J.\ Du, H.\ Li, T.\ Yang, Z.-B.\ Chen, J.-W.\ Pan}, %
  \textit{Implement quantum random walks with linear optics elements}, %
  {2002; arXiv:quant-ph/0212149}.

\bibitem{HKSS}
  {Yu.\ Higuchi, N.\ Konno, I.\ Sato, E.\ Segawa.}, %
  \textit{Quantum graph walk I: mapping to quantum walks}, %
  {Yokohama \ Mathematical \ Journal\ 59, 33-55, 2013}.


\end{thebibliography}
\end{document}